\newtheorem{theorem}{Theorem}
\newtheorem{lemma}[theorem]{Lemma}
\newtheorem*{recap}{Recap}
\theoremstyle{definition}
\newtheorem{defn}{Definition}
\newtheorem{assum}{Assumption}
\begin{document}

\title{Lyapunov-Based Stabilization and Control of Closed Quantum Systems}

\author{Elham~Jamalinia,  
             Peyman~Azodi,  
             Alireza~Khayatian, 
        and~Peyman~Setoodeh
\thanks{E. Jamalinia is with the Department of Electrical and Computer Engineering, Lehigh University, Bethlehem, PA, USA (e-mail: elj320@lehigh.edu).}
\thanks{P. Azodi is with the Department of Chemistry, Princeton University, Princeton, NJ, USA (e-mail: pazodi@princeton.edu).}
\thanks{A. Khayatian and P. Setoodeh are with the School of Electrical and Computer Engineering, Shiraz University, Shiraz, Iran (e-mail: khayatia@shirazu.ac.ir; psetoodeh@shirazu.ac.ir).}
}

\maketitle

\begin{abstract}
A Lyapunov-based method is presented for stabilizing and controlling of closed quantum systems. The proposed method is constructed upon a novel quantum Lyapunov function of the system state trajectory tracking error. A positive-definite operator in the Lyapunov function provides additional degrees of freedom for the designer. The stabilization process is analyzed regarding two distinct cases for this operator in terms of its vanishing or non-vanishing commutation with the Hamiltonian operator of the undriven quantum system. To cope with the global phase invariance of quantum states as a result of the quantum projective measurement postulate, equivalence classes of quantum states are defined and used in the proposed Lyapunov-based analysis and design. Results show significant improvement in both the set of stabilizable quantum systems and their invariant sets of state trajectories generated by designed control signals. The proposed method can potentially be applied for high-fidelity quantum control purposes in quantum computing frameworks. 
\end{abstract}

\begin{IEEEkeywords}
Quantum control, Lyapunov theory, Stabilization.
\end{IEEEkeywords}

\section{Introduction}
\label{sec:introduction}

\par Studying dynamic systems at nano-scale using quantum physics has been one of the most intriguing problems in the last century due to their unique characteristics such as probabilistic description, inherent spin, and entanglement. Engineering applications of quantum systems have been spreading in a variety of areas \cite{dowling2003quantum}; including quantum computing \cite{divincenzo1995quantum}, quantum communications \cite{holevo1979capacity,pirandola2019end}, and quantum sensing \cite{giovannetti2006quantum,pirandola2018advances} as well as health care, medicine, and biology \cite{kislyakov2019physics,mcfadden2016life,poggiali2018optimal}. Alongside this increase of applications, urge for a solid control and stability analysis framework for quantum systems has increased in the last two decades leading to emergence of quantum control theory \cite{PhysRevLett.68.1500,Warren1581,belavkin1987non}. In this paper, Lyapunov theory is deployed to analyze and control closed-quantum systems, which refer to quantum systems with no uncontrolled interaction with the environment.

\par Lyapunov-based control of quantum systems was first proposed in~\cite{grivopoulos2003lyapunov}, where the Lyapunov function was constructed upon the expected value of a predetermined observable. Although this theory is applicable to a limited number of classes of quantum systems, it was an important step towards controlling quantum systems using the Lyapunov theory. Lyapunov-based control of quantum systems was further developed in \cite{mirrahimi2005lyapunov} and \cite{kuang2008lyapunov}. In these papers, different Lyapunov functions based on the tracking error, the expected value of an operator at the state of the system, and the quantum state fidelity, were proposed and their properties were studied. For pure-state quantum systems, Lyapunov-based stabilization techniques have been further analyzed for switching control techniques \cite{zhao2012switching} and optimal control \cite{hou2012optimal,wang2014optimal}. Moreover, for mixed-state quantum systems, Lyapunov-based methods have been deployed to design stabilizing controllers \cite{wang2010analysis}. Recently, Lyapunov-based controllers have been applied to superconducting qubits \cite{zeng2018quantum,ji2018lyapunov} and cavity quantum electrodynamic devices \cite{li2018driving,ran2018high}. Each of the proposed Lyapunov functions are suitable for a specific class of quantum systems, which limits their domain of applicability. Proposing Lyapunov functions capable of stabilizing a wider class of quantum systems is a challenging problem. Another challenging problem is to propose a Lyapunov function capable of stabilizing the system at a more general set of quantum states. In this paper, a novel Lyapunov function is proposed, which is constructed upon the expected value of a positive-definite operator $P$ of the tracking error. It is shown that this Lyapunov function is capable of stabilizing a wider range of quantum systems at a generic set of quantum states. Moreover, $P$ provides additional degrees of freedom that can be used as control variables to satisfy different performance measures in the controller design procedure.

The proposed theory in this paper studies controlling and stabilization of closed quantum systems regarding two distinct cases for the positive-definite operator $P$. In the first case, this operator commutes with the drift Hamiltonian (i.e., Hamiltonian of the closed system in the absence of any control). This case is similar to some of the existing theories in the literature \cite{mirrahimi2005lyapunov,kuang2008lyapunov,zhao2012switching}. However, compared to these methods, the proposed approach provides a stabilizing controller design scheme, which relaxes some of the restricting conditions on the quantum systems for which this method is applicable. It is worth noting that the proposed method does not modify the set of stable fixed points of the dynamics. In the second case, the operator $P$, does not commute with the drift Hamiltonian. In this case, the fixed points of the dynamics are restricted to the set of eigenstates of the drift Hamiltonian without requiring any restricting condition on the quantum system under study. The controller design scheme proposed in this case is applicable to a wide range of quantum systems, which is the main contribution of this paper. 

\par This paper is organized as follows. In section \ref{problem}, time evolution of closed quantum systems under Schr\"odinger equation is introduced and some definitions and assumptions on the systems under consideration are presented. In section \ref{Lyapunov}, a novel Lyapunov function is introduced and its properties are studied. Additionally, the invariant set of quantum states are defined in this section. In subsections \ref{case1} and \ref{case2}, two distinct cases for the positive-definite operator $P$ in the Lyapunov function are analyzed in details, and the control input and invariant set conditions are derived. In section \ref{results}, a five-state quantum system is studied as an example to show the capabilities of the proposed method in comparison with some of the existing methods in the literature. The paper concludes in the final section.

\section{Problem Formulation}\label{problem}

Our goal here is to manipulate the pure state $\ket{\psi} \in \mathbb{C}^N$ of an $N-$ dimensional closed quantum system. The time evolution of this class of systems is governed by the following linear first-order partial differential equation, the celebrated Schr\"{o}dinger equation:
\begin{equation}\label{Schrodinger}
  i\hbar \frac{\partial }{\partial t}\ket{\psi{(t)}}= H \ket{\psi{(t)}}, \; \ket{\psi{(0)}}=\ket{\psi_0},
 \end{equation}
 where $\hbar$ is the Planck constant, $H=H^{\dag}$ is the Hamiltonian of the system, and $\ket{\psi_0}$ is the initial state. The pure state of the system is normalized $\braket{\psi|\psi}=1$ and lives on the hypersphere $\mathbb{R}^{2N-1}$.  The state transition group generated by (\ref{Schrodinger}) is the class of exponential maps, known as Special Unitary group $SU(N)$, which preserves the norm of the system state.
 
Let us assume that the Hamiltonian of the system has a bilinear form as follows:
\begin{equation}\label{control}
   H=H_0+\sum_{k=1}^{r} {u_k H_k},
\end{equation}
where $H_0$ is the so called drift Hamiltonian, $u_k$'s are the control signals and $H_k$'s are the corresponding control Hamiltonians. The eigenvalues and their associated eigenvectors of the drift Hamiltonian are denoted by $\lambda_i$ and $\ket{\lambda_i}$,  respectively, hence, $H_0\ket{\lambda_i}=\lambda_i \ket{\lambda_i}$.

The following assumptions on the system are helpful for stability analysis in the upcoming sections.
\begin{assum} The quantum system given by (\ref{Schrodinger}) and (\ref{control}) is Equivalent State Controllable (ESC) \cite{d2007introduction,albertini2003notions}. ESC assures that for any arbitrary initial and final states $\ket{\psi_0}$ and $\ket{\psi_f}$, there exist the finite time $T$, real number $\epsilon$, and control signals $u_k$, such that the solution to (\ref{Schrodinger}) given (\ref{control}) obeys the following equation:
$$
\ket{\psi(T)}=e^{i\epsilon}\ket{\psi_f}.
$$
\end{assum}
\begin{assum}\label{lambda}
The drift Hamiltonian $H_0$ is \textit{$\lambda$-nondegenerate} and the energy levels are not equally spaced:
\begin{equation}{\label{nondeg}}
    {\lambda_i}-{\lambda_j}\neq{\lambda_m}-{\lambda_n}, \;\;  i-j\neq m-n.
\end{equation}
This implies that $H_0$ is also \textit{nondegenerate}:
\begin{equation}\label{lambda1}
    \lambda_m\neq \lambda_j, \;\; \forall m \neq n\leq N.
\end{equation}
\end{assum}

The following concepts will be used in what follows:
\begin{defn}[Equivalence classes of quantum states]
The equivalence class of $\ket{\psi}$, denoted by $[\ket{\psi}]$, is defined as the set of all normalized complex vectors that are different from $\ket{\psi}$ by at most a complex global phase:
$$[\ket{\psi}]\triangleq \{e^{i\varrho}\ket{\psi}\mid\varrho\in\mathbb{R}\}. $$
\end{defn}

One can easily deduce that two quantum states $\ket{\psi_1}$ and $\ket{\psi_2}$ are equivalent if and only if $\ket{\psi_2}\in[\ket{\psi_1}]$. 
This definition is physically intuitive since all members of an equivalence class, although mathematically different, are indistinguishable by quantum von Neumann measurement and they describe the same state of knowledge about the quantum system. This unique physical property of quantum systems calls for a control theory that is invariant with respect to global phases. In other words, control signals designed to stabilize a quantum system must ignore the global phase of the system initial state \cite{azodi2017lyapunov}. The next definition describes the  \textit{Equivalently Satisfied Statements} to address this issue. 
\begin{defn}[Equivalently Satisfied Statements]\label{equiv}
A mathematical statement is equivalently satisfied for $\ket{\psi}$ and $\ket{\psi_f}$, if for every element in $[\ket{\psi}]$, there exists a corresponding element in $[\ket{\psi_f}]$ that satisfies the statement.  
\end{defn}

Due to the intrinsic random nature of quantum measurement, in designing control signals for quantum systems, the goal is to achieve the desired post-measurement statistics. This fact dictates that designed signals should transfer the measurement statistics of the initial system $[\ket{\psi}]$ -rather than the mathematical representation of one of its physical realizations - to the desired final measurement statistics $[\ket{\psi_f}]$. Designed control signals will hold this important feature, if they equivalently satisfy the statements in the sense described in Definition \ref{equiv}.

\par The following assumptions pave the way for system stabilization.
\begin{assum}\label{H_0,f}
The final desired state $\ket{\psi_f}$ is an eigenstate of the drift Hamiltonian $H_0$ with the corresponding eigenvalue $\lambda_f$: 
\begin{equation}\label{H_0}
    H_0\ket{\psi_f}=\lambda_f\ket{\psi_f}.
\end{equation}
\end{assum}
\begin{assum}
$\ket{\psi_f}$ is not an eigenstate of the control Hamiltonian operators. 
\end{assum}
As will be discussed in the following section, this assumption guarantees that all control Hamiltonians take part in the stabilization process. If the condition is not satisfied, then, the corresponding Hamiltonians will not contribute to the stabilization process (i.e., their associated terms will vanish). 
\begin{assum}\label{assumH}
Control Hamiltonian operators do not have any common eigenvectors.
\end{assum}

\par In the following section, a Lyapunov-based stabilization method is proposed for the closed quantum systems described in this section. Detailed stability analysis is also provided for such systems.

\section{Lyapunov-Based Stabilization}\label{Lyapunov}

In this section, a novel Lyapunov function is proposed based on the tracking error $\ket{\psi{(t)}-\psi_f}$, which is characterized by a positive-definite operator $P$. Using Lyapunov theory, a general form for the control law $u_k$ will be obtained and the invariant set of the dynamic system will be found to assure the stability of the system. In the following subsections, it will be shown that how the structure of operator $P$ affects the invariant set by considering two distinct cases.

\subsection{Lyapunov Function}\label{Lyapunov function}

There are several proposed Lyapunov candidates for quantum systems in the literature. In this paper, a novel Lyapunov function is proposed as follows:
\begin{equation}\label{2}
  V(\ket{\psi(t)})=\bra{\psi{(t)}-\psi_f}P\ket{\psi{(t)}-\psi_f},
\end{equation}
where $P$ is a positive-definite operator. This quadratic Lyapunov function minimizes the error in the post-measurement statistics. Additional degrees of freedom provided by $P$, allow for satisfying different performance measures and ultimately lead to a high-performance controller design scheme for quantum systems. If $P=\frac{I}{2}$, then, the Lyapunov function will be the same as in \cite{kuang2008lyapunov}, and if $\ket{\psi_f}$ is not considered in the Lyapunov function, it will reduce to the form proposed in \cite{grivopoulos2003lyapunov}.

\begin{lemma}\label{Lya}
The Lyapunov function (\ref{2}) is equivalently zero in $[\ket{\psi_f}]$.
\end{lemma}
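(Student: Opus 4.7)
The plan is to unpack Definition \ref{equiv} as applied to the statement ``$V(\ket{\psi(t)}) = 0$'' and to show that whenever $\ket{\psi(t)}$ belongs to $[\ket{\psi_f}]$, a matching representative of $[\ket{\psi_f}]$ can be selected on the reference side so that the tracking error, and hence the quadratic form in (\ref{2}), vanishes. No machinery beyond the definition of $[\ket{\psi_f}]$ and the positive-definiteness of $P$ is needed; the argument is a direct verification.

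To carry this out, I would first pick an arbitrary member $\ket{\psi'} \in [\ket{\psi_f}]$ and write it, by the definition of the equivalence class, as $\ket{\psi'} = e^{i\varrho}\ket{\psi_f}$ for some $\varrho \in \mathbb{R}$. Next, as the corresponding element of $[\ket{\psi_f}]$ on the reference side, I would take $\ket{\psi_f'} = e^{i\varrho}\ket{\psi_f}$, which lies in $[\ket{\psi_f}]$ by construction. Substituting into (\ref{2}) with $\ket{\psi_f}$ replaced by $\ket{\psi_f'}$, the tracking error $\ket{\psi' - \psi_f'}$ becomes the zero vector, so the quadratic form collapses to $0 \cdot P \cdot 0 = 0$ independently of $P$. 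Because this construction works uniformly for every $\ket{\psi'} \in [\ket{\psi_f}]$, the condition in Definition \ref{equiv} is met, which is exactly the content of the lemma.

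The main (and essentially only) obstacle is interpretive rather than computational: one must resist the temptation to read ``equivalently zero in $[\ket{\psi_f}]$'' as a pointwise identity with $\ket{\psi_f}$ held fixed. A naive substitution $\ket{\psi(t)} = e^{i\varrho}\ket{\psi_f}$ into (\ref{2}) with $\varrho \neq 0$ would give $V = |e^{i\varrho} - 1|^2\,\bra{\psi_f}P\ket{\psi_f}$, which is strictly positive since $P$ is positive-definite. The correct reading, forced by Definition \ref{equiv} and by the physical indistinguishability of vectors differing by a global phase, is that the phase on the reference side may be adjusted to match the phase on the trajectory side; once this bookkeeping is made explicit, the claim is immediate.
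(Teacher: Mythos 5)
Your proof is correct and follows essentially the same route as the paper's: both unpack Definition \ref{equiv} and match the global phase of the reference state to that of the trajectory state so that the tracking error, and hence the quadratic form, vanishes. The only difference is that the paper's proof runs the argument as a chain of equivalences, invoking positive-definiteness of $P$ (i.e., $x^{\star}Px=0\iff x=0$) to conclude that $V$ is equivalently zero \emph{only} on $[\ket{\psi_f}]$, whereas you prove just the forward direction that the lemma literally states and correctly note that positive-definiteness is not needed for it.
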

\begin{proof}
Based on Definition \ref{equiv}, if the Lyapunov function (\ref{2}) is equivalently zero, then for every element in $[\ket{\psi}]$, there must exist a corresponding element in $[\ket{\psi_f}]$ satisfying this condition. In other words, for every $\alpha \in \mathbb{R}$, there should be a $\beta \in \mathbb{R}$ such that
$$
 \bra{e^{-i\alpha}\psi-e^{-i\beta}\psi_f}P\ket{e^{i\alpha}\psi-e^{i\beta}\psi_f}=0,
$$ 
which is zero when $\ket{e^{i\alpha}\psi-e^{i\beta}\psi_f}=0$. Regarding the fact that for a positive-definite operator $P$, we have $x^{\star}P x=0\iff x=0$, the Lyapunov function (\ref{2}) is zero if and only if 
\\
%\\
$\ket{e^{i\alpha}\psi-e^{i\beta}\psi_f}=0\iff {e^{i\alpha}}\ket\psi={e^{i\beta}\ket{\psi_f}}\iff \ket\psi= {e^{i(\alpha-\beta)}}\ket{\psi_f}\iff \psi \in [\ket{\psi_f}]. $ 
%\\
%\\
\end{proof}

Without considering Definition \ref{equiv}, the Lyapunov function (\ref{2}) is zero if and only if $\ket{\psi}=\ket{\psi_f}$. However, in quantum systems, physically, there is no difference between members of an equivalence class. Thus, the Lyapunov function should be zero in all members of the equivalence class $[\ket{\psi_f}]$. As mentioned previously, this issue can be addressed by defining the equivalently satisfied statements as in Definition \ref{equiv}. 

\subsection{Time Derivative of the Lyapunov Function}\label{general}

In order to investigate the stability of the quantum system described in (\ref{Schrodinger}) and (\ref{control}), and find the fixed points of the dynamic system, time evolution of Lyapunov function needs to be analyzed:
\begin{eqnarray}
 V&=&\bra{\psi{(t)}-\psi_f}P\ket{\psi{(t)}-\psi_f}\\&=&\bra{\psi{(t)}}P\ket{\psi{(t)}}-\bra{\psi_f}P\ket{\psi{(t)}}\nonumber\\&-&\bra{\psi{(t)}}P{\ket{\psi_{f}}}+\bra{{\psi_f}}P\ket{\psi_f} \nonumber.
\end{eqnarray}
The time derivative of $V$ will be: 
\begin{eqnarray}
\dot{V}&=&\dot{\bra{\psi{(t)}}}P\ket{\psi{(t)}}+\bra{\psi{(t)}}P\dot{\ket{\psi{(t)}}} \\&-&\bra{\psi_f}P\dot{\ket{\psi{(t)}}}-\dot{\bra{\psi{(t)}}}P{\ket{\psi_{f}}}\nonumber.
\end{eqnarray}
Substituting for $\dot{\ket{\psi{(t)}}}$ from (\ref{Schrodinger}), we have:
\begin{eqnarray}
\dot{V}&=&\frac{i}{h} \Big( \bra{\psi{(t)}}HP\ket{\psi{(t)}}-\bra{\psi{(t)}}PH\ket{\psi{(t)}}\\&+& \bra{\psi_f}PH\ket{\psi{(t)}}-\bra{\psi{(t)}}HP{\ket{\psi_{f}}} \Big) 
\nonumber\\&=&\frac{i}{h} \Big( \bra{\psi{(t)}}(HP-PH)\ket{\psi{(t)}} \nonumber\\&-& \big(\bra{\psi{(t)}}HP{\ket{\psi_{f}}} - (\bra{\psi{(t)}}HP{\ket{\psi_{f}}})^*\big) \Big) \nonumber \\ &=&\frac{i}{h}\bra{\psi{(t)}}[H,P]\ket{\psi{(t)}}+\frac{2}{h} \text{Im}\{\bra{\psi{(t)}}HP{\ket{\psi_{f}}}\} \nonumber,
\end{eqnarray}
where $\text{Im}\{.\}$ denotes the imaginary part. Substituting $H$ form (\ref{control}) into the above equation and extending the commutations, we obtain: 
\begin{eqnarray}
\dot{V}&=&\frac{i}{h}\bra{\psi{(t)}}[H_{0},P]\ket{\psi{(t)}} 
\\&+& \frac{i}{h}\displaystyle\sum_{k=1}^{r} {\bra{\psi}[H_k,P]\ket{\psi}u_k} \nonumber
\\&+&\frac{2}{h}\text{Im}\{\bra{\psi{(t)}}H_0P{\ket{\psi_{f}}}\}\nonumber
\\&+&\frac{2}{h}\displaystyle\sum_{k=1}^{r}\text{Im}\{\bra{\psi{(t)}}H_{k}P{\ket{\psi_{f}}}\}u_k \nonumber
\end{eqnarray}
\begin{eqnarray}\label{vdot}
\dot{V}&=&\frac{i}{\hbar}\bra{\psi}[H_0,P]\ket{\psi}+\frac{2}{\hbar}\text{Im}\{\bra{\psi}H_0P\ket{\psi_f}\} 
\\&+& \displaystyle\sum_{k=1}^{r} \Big(\frac{i}{\hbar}\bra{\psi}[H_k,P]\ket{\psi} + \frac{2}{\hbar}\text{Im}\{\bra{\psi}H_k P\ket{\psi_f}\}\Big)u_k \nonumber
\end{eqnarray}

The goal is to design ${u_k}$ such that the system is stabilized at $[\ket{\psi_f}]$. This goal can be achieved by choosing the following form for the control law $(u_k)$:
\begin{equation}\label{uk}
    u_k(t)=-K_k f_k\Big(i\bra{\psi}[H_k,P]\ket{\psi}+2\text{Im}\{\bra{\psi}H_k P\ket{\psi_f}\}\Big),
\end{equation}
where $K_k$'s are positive real constants and $f_k$'s are odd functions. 
Terms $i\bra{\psi}[H_k,P]\ket{\psi}$ and $\text{Im}\{\bra{\psi}H_k\ket{\psi_f}\}$ were also used in the control laws proposed in \cite{grivopoulos2003lyapunov} and \cite{kuang2018lyapunov}, respectively. Proper choice of constants $K_k$'s can guarantee that the time derivative of the Lyapunov function will be negative.

In order to find the invariant set, equations $\dot{V}=0$ and $u_k=0,\forall{k}\in \{1,...,r\}$ must be solved. Regarding the form of $ u_k(t)$ in (\ref{uk}), $\forall{k}\in \{1,...,r\}$, we have:
\begin{equation}\label{u-invar}
    u_k=0 \iff i\bra{\psi}[H_k,P]\ket{\psi}+2\text{Im}\{\bra{\psi}H_k P\ket{\psi_f}\}=0.
\end{equation}
Using equations (\ref{u-invar}) and (\ref{vdot}), $\dot{V} |_{u_k=0}=0$ leads to:
\begin{equation}\label{v-invar}
\frac{i}{\hbar}\bra{\psi}[H_0,P]\ket{\psi}+\frac{2}{\hbar}\text{Im}\{\bra{\psi}H_0P\ket{\psi_f}\}=0.
\end{equation}
The intersection of solutions to equations (\ref{u-invar}) and (\ref{v-invar}) characterizes the invariant set of the dynamic system, which can be modified by proper design of operator $P$.

Two different general cases can be considered for operator $P$: 
\begin{enumerate}[label=(\roman*)]
  \item Operator $P$ and the drift Hamiltonian of the system $H_0$ share the same set of eigenvectors, $[H_0,P]=0$.
  \item Operator $P$ and the drift Hamiltonian of the system $H_0$ have different set of eigenvectors, $[H_0,P]\neq0$.
\end{enumerate}
The following subsections discuss these two cases in detail, where $[A,B]=AB-BA$.

\subsection{Case~I $[H_0,P]=0$}\label{case1}

In this case, operator $P$ is selected such that $[H_0,P]=0$, which means that $H_0$ and $P$ can share the same set of eigenvectors. It is shown that for the Lyapunov function (\ref{2}), the invariant set of the system will be restricted.    
Based on Assumption \ref{H_0,f}, the desired final state $\ket{\psi_f}$ is an eigenvector of $H_0$ and due to the common eigenbasis of $H_0$ and $P$, the desired final state is an eigenvector of $P$ as well:
\begin{equation}\label{P1}
    P\ket{\psi_f}=p_f\ket{\psi_f},
\end{equation}
where $p_f > 0$ is the corresponding eigenvalue, regarding that $P$ is positive-definite.

Based on the general form of the time derivative of the Lyapunov function in (\ref{vdot}) and  equation (\ref{P1}), $\dot{V}$ can be written as:
\begin{eqnarray}\label{vdot1}
\dot{V}&=&\frac{2}{\hbar} \lambda_f{p_f}\text{Im}\{\braket{\psi|\psi_f}\} \\ &+& \displaystyle\sum_{k=1}^{r} {u_k(\frac{i}{\hbar}\bra{\psi}[H_k,P]\ket{\psi}+\frac{2}{\hbar} p_f\text{Im}\{\bra{\psi}H_k\ket{\psi_f}})\}. \nonumber
\end{eqnarray}

In this case, the control law is obtained from the general control law in (\ref{uk}) as follows:
\begin{equation}\label{uk1}
    u_k(t)=-K_k f_k(i\bra{\psi}[H_k,P]\ket{\psi}+2p_f\text{Im}\{\bra{\psi}H_k\ket{\psi_f}\}).
\end{equation}
Based on the general form of the invariant set obtained from equations (\ref{u-invar}) and (\ref{v-invar}), the invariant set in this case is characterized as:
\begin{eqnarray}\label{u2}
   \Big\{\ket{\psi} | i\bra{\psi}[H_k,P]\ket{\psi}+2p_f\text{Im}\{\bra{\psi}H_k\ket{\psi_f}\}=0; \\ \forall{k}\in \{1,...,r\}\Big\} \nonumber
\end{eqnarray}
Regarding equations (\ref{v-invar}) and (\ref{P1}), another necessary condition for $\ket{\psi}$ to be an invariant point is (note that $\lambda_f{p_f}>0$):
\begin{equation}
    \text{Im}\{\braket{\psi|\psi_f}\}=0.
\end{equation}
As discussed previously, due to the global phase invariance property, the invariant set should be equivalently satisfied. By substituting $\ket{\psi}$ with its equivalent states $e^{i\varrho}\ket{\psi}$ in $\bra{\psi}[H_k,P]\ket{\psi}$, we have:
\begin{eqnarray}
e^{-i\varrho}\bra{\psi}[H_k,P]\ket{\psi}e^{i\varrho} \nonumber &=& e^{-i\varrho}e^{+i\varrho}\bra{\psi}[H_k,P]\ket{\psi} \\ &=& \bra{\psi}[H_k,P]\ket{\psi}. 
\end{eqnarray}
Therefore, in equation (\ref{u2}), the first term is invariant in an equivalence class. Thus, in order for this equation to equivalently remain zero, the term $\text{Im}\{\bra{\psi}H_k P\ket{\psi_f}\}$ should be zero in the equivalence class.

To summarize, the invariant set of the system is the intersection of the following sets:
\begin{equation}\label{invar1}
   \{\ket{\psi}| \; \bra{\psi}[H_k,P]\ket{\psi}=0; \forall{k}\in \{1,...,r\}\},
\end{equation}
\begin{equation}\label{invar2}
    \{\ket{\psi}| \; \text{Im}\{\bra{\psi}H_k\ket{\psi_f}\}=0 ; \forall{k}\in \{1,...,r\}\},
\end{equation}
\begin{equation}\label{invar3}
    \{\ket{\psi}| \; \text{Im}\{\braket{\psi|\psi_f}\}=0\}.
\end{equation}
Set (\ref{invar1}) was also derived in \cite{grivopoulos2003lyapunov}. As discussed before, equation (\ref{invar1}) is equivalently invariant. Based on Definition \ref{equiv}, in order for equations (\ref{invar2}) and (\ref{invar3}) to be equivalently invariant, for all $\alpha \in[0,2\pi)$, there should exist $\beta\in[0,2\pi)$ such that:
\begin{equation}\label{alpha}
    \text{Im}\Big\{\bra{\psi}H_k \ket{\psi_f}e^{i(\beta-\alpha)}\Big\}=0 ; \; \forall{k}\in \{1,...,r\},
\end{equation}
\begin{equation}\label{beta}
     \text{Im}\Big\{\braket{\psi|\psi_f}e^{i(\beta-\alpha)}\Big\}=0 ; \; \forall{k}\in \{1,...,r\}.
\end{equation}
Equations (\ref{alpha}) and (\ref{beta}) cannot simultaneously hold except for one of the following cases:
 \begin{enumerate}[label=(\roman*)]
 \item If $\braket{\psi|\psi_f}=0$, and
 \begin{equation}\label{cond1}
     \angle{\bra{\psi}H_k \ket{\psi_f}}=\angle{\bra{\psi}H_j \ket{\psi_f}}; \; \forall{k,j}\in \{1,...,r\},
 \end{equation}
 \item If $\braket{\psi|\psi_f}\neq0$ and
  \begin{equation}\label{cond2}
     \angle{\bra{\psi}H_k \ket{\psi_f}}=\angle{\braket{\psi|\psi_f}}; \; \forall{k}\in \{1,...,r\}.
 \end{equation}
 \end{enumerate}

So far, the invariant set of the system has been characterized for the case $[H_0,P]=0$. In what follows, the invariant set is investigated from another point of view. Let us expand $\ket{\psi_0}$ in the basis of $\ket{\lambda_m}$ \cite{grivopoulos2003lyapunov}:
 \begin{equation}\label{form0}
     \ket{\psi_0}=\displaystyle\sum_{m=1}^{n} c_m\ket{\lambda_m},
 \end{equation}
where $c_m=\braket{\lambda_m,\psi_0}\in \mathbb{C}$ and $\ket{\lambda_m}$ was defined regarding equation (\ref{control}).
In the invariant set ($u_k=0$), based on the Schr\"{o}dinger equation (\ref{Schrodinger}) and equation (\ref{form0}), the time evolution of $\ket{\psi}$ is obtained from:
\begin{equation}\label{form}
     \ket{\psi}=\displaystyle\sum_{m=1}^{n} c_me^{-i\frac{\lambda_m}{\hbar}t}\ket{\lambda_m}.
 \end{equation}
Substituting for $\ket{\psi}$ from (\ref{form}) into equation (\ref{invar1}), we have:
 \begin{multline}\label{c1}
     \bra{\psi}[H_k,P]\ket{\psi} = \\ \displaystyle\sum_{m,j}c_m^{\star}c_j(p_j-p_m)e^{i(\lambda_m-\lambda_j)t}\bra{\lambda_m}H_k\ket{\lambda_j}=0, 
 \end{multline}
which should remain zero $\forall t>0$. As a result, based on Assumption \ref{lambda}, the following condition should be necessarily satisfied in the invariant set:
 \begin{equation}\label{h}
      c_m^{\star}c_j\bra{\lambda_m}H_k\ket{\lambda_j}=0;  \;  \forall m\neq j ,\forall{k}.
 \end{equation}
This condition highly restricts the class of quantum systems under study, because it requires transition amplitude for all control Hamiltonians to be between eigenstates of the drift Hamiltonian. Although (\ref{h}) is a very restricting condition, it is required by other similar methods reported in the literature \cite{kuang2008lyapunov}. In the following subsection, it will be shown that, how this condition can be relaxed by suitably designing $P$ such that $[H_0,P]\neq0$.

\begin{recap}
Using the proposed Lyapunov function (\ref{2}), the invariant set of the system not only is characterized by the conditions proposed in \cite{kuang2008lyapunov} and \cite{shuang2007quantum} (without the constraint ${\bra{\lambda_m}}H_k\ket{\lambda_j}\neq0$), but also has to satisfy conditions (\ref{cond1}) and (\ref{cond2}). As a result, using the control law (\ref{uk1}), the invariant set of the system can be significantly modified and restricted.
\end{recap}

\subsection{Case~II $[H_0,P]\neq0$}\label{case2}

The second case for operator $P$ is discussed in this subsection. Here, this operator is designed such that $P$ and drift Hamiltonian $H_0$ do not share any set of common eigenbasis. At first glance, this case might seem mathematically more challenging, but it is shown that by using such an operator $P$ in the Lyapunov function, the invariant set of the system can be restricted to the desired final state $\ket{\psi_f}$.

In this case, the time derivative of the Lyapunov function $\dot{V}$ and the control law $u_k$ are defined as in (\ref{vdot}) and (\ref{uk}), respectively. The invariant set of the system will be analyzed in the sequel. Following the strategy presented in subsection \ref{general}, the basic invariant set conditions in this case are obtained as (\ref{u-invar}) and (\ref{v-invar}). By analyzing these equations in the equivalence class $[\ket{\psi}]$, as discussed in subsection \ref{general}, the two invariant set equations in (\ref{u-invar}) and (\ref{v-invar}) can be presented as the following four equations:
 \begin{equation}\label{invarr0}
      \bra{\psi}[H_0,P]\ket{\psi}=0,
 \end{equation}
 \begin{equation}\label{invarr1}
    \bra{\psi}[H_k,P]\ket{\psi}=0; \; \forall{k}\in \{1,...,r\},
 \end{equation}
 \begin{equation}\label{invarr2}
      \text{Im}\{\bra{\psi}H_0 P\ket{\psi_f}\}=0,
 \end{equation}
 \begin{equation}\label{invarr3}
     \text{Im}\{\bra{\psi}H_k P\ket{\psi_f}\}=0; \; \forall{k}\in \{1,...,r\}.
 \end{equation}
Equation (\ref{invarr0}) is analyzed by expanding $\ket{\psi}$ based on the set of eigenvectors of the drift Hamiltonian $\ket{\lambda_m}$. Note that this expansion is not based on the eigenvectors of $P$, because operator $P$ and drift Hamiltonian of the system do not share any set of eigenstates.
Substituting $\ket{\psi}$ from (\ref{form}) into equation (\ref{invarr0}), we obtain:
\begin{multline}\label{eq}
     \bra{\psi}[H_0,P]\ket{\psi}= \\ \displaystyle\sum_{m,j}c_m^{\star}c_j(\lambda_m-\lambda_j)e^{i(\lambda_m-\lambda_j)t}\bra{\lambda_m}P\ket{\lambda_j}=0.
\end{multline}
If $\ket{\psi}$ is in the invariant set, this condition should hold $\forall t>0$.

The following theorem provides a sufficient condition on $P$, in order to restrict the invariant set of the system to the set of eigenvectors of $H_0$.

\begin{theorem}
If operator $P$ satisfies the following condition,
\begin{equation}\label{P}
     \bra{\lambda_m}P\ket{\lambda_j}\neq 0; \; \forall m\neq j \in \{1,...,n\},
\end{equation}
then, the invariant set of the dynamics given by (\ref{Schrodinger}) is uniquely restricted to the set of eigenvectors of $H_0$. However, linear combinations of the eigenvectors of $H_0$ will not be in the invariant set.
\end{theorem}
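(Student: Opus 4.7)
The plan is to start from equation~(\ref{eq}), which is the condition $\bra{\psi}[H_0,P]\ket{\psi}=0$ expanded in the eigenbasis of $H_0$, and extract pointwise-in-time vanishing conditions on the coefficients $c_m$. Since a genuine invariant trajectory must satisfy (\ref{eq}) for all $t>0$, I would invoke the linear independence of the exponentials $\{e^{i(\lambda_m-\lambda_j)t}\}_{m\neq j}$, which is guaranteed by the $\lambda$-nondegeneracy Assumption~\ref{lambda} (distinct transition frequencies). The diagonal terms $m=j$ drop out because of the factor $(\lambda_m-\lambda_j)$, so one is left with the off-diagonal requirement
\[
c_m^{\star}c_j\,(\lambda_m-\lambda_j)\,\bra{\lambda_m}P\ket{\lambda_j}=0,\qquad \forall\,m\neq j.
\]

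Next I would use the two structural hypotheses to kill every factor except the $c$'s. Assumption~\ref{lambda} gives $\lambda_m\neq\lambda_j$ for $m\neq j$, and the theorem's hypothesis (\ref{P}) gives $\bra{\lambda_m}P\ket{\lambda_j}\neq 0$ for $m\neq j$. Hence the relation collapses to $c_m^{\star}c_j=0$ for all $m\neq j$. This is a purely combinatorial constraint: among the $n$ complex numbers $\{c_1,\ldots,c_n\}$, no two can be simultaneously nonzero. Combined with the normalization $\sum_m|c_m|^2=1$, exactly one $c_{m_0}$ is nonzero and has unit modulus, so $\ket{\psi(0)}=c_{m_0}\ket{\lambda_{m_0}}\in[\ket{\lambda_{m_0}}]$. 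By (\ref{form}), the trajectory then stays (up to a time-dependent global phase) in the equivalence class of the eigenstate $\ket{\lambda_{m_0}}$, which lies in the invariant set as desired.

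For the converse statement, I would argue by contradiction: if $\ket{\psi}$ were a nontrivial linear combination of eigenvectors of $H_0$ (i.e.\ at least two coefficients $c_{m_1},c_{m_2}$ nonzero), then $c_{m_1}^{\star}c_{m_2}\neq 0$, violating the condition derived above; hence $\ket{\psi}$ cannot belong to the invariant set. This half of the theorem is essentially immediate once the necessary condition has been extracted. It is also worth pointing out that each eigenstate $\ket{\lambda_m}$ is automatically in the invariant set of the closed-loop dynamics, because the Schr\"odinger evolution on an eigenstate of $H_0$ only produces a global phase, and the control laws in (\ref{uk}) together with the remaining invariant-set equations (\ref{invarr1})--(\ref{invarr3}) are phase-insensitive.

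The only delicate step is the linear-independence argument. One must be sure that the exponentials $e^{i(\lambda_m-\lambda_j)t}$ taken over the ordered pairs $(m,j)$ with $m\neq j$ are linearly independent as functions of $t$; this is exactly what the nondegenerate-transition condition in Assumption~\ref{lambda} provides (pairs $(m,j)\neq(m',j')$ with $m-j\neq m'-j'$ give distinct exponents, and Hermitian conjugate pairs combine into real sinusoids of distinct frequencies). I would explicitly cite Assumption~\ref{lambda} at that step. Everything else is algebraic manipulation of a finite sum, so no further obstacle is anticipated.
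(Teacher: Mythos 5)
Your proposal is correct and follows essentially the same route as the paper: expand $\bra{\psi}[H_0,P]\ket{\psi}=0$ in the $H_0$-eigenbasis as in (\ref{eq}), use nondegeneracy of the $\lambda$'s and hypothesis (\ref{P}) to force $c_m^{\star}c_j=0$ for $m\neq j$, and conclude via normalization that exactly one coefficient survives. You are in fact more explicit than the paper at the one delicate point---the linear independence of the exponentials $e^{i(\lambda_m-\lambda_j)t}$ needed to pass from the vanishing sum to the vanishing of each term, which the paper asserts implicitly---and in verifying the converse direction, but the underlying argument is identical.
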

\begin{proof}
When the system state trajectory reaches the invariant set, equation (\ref{eq}) should hold. Given the fact that equation (\ref{eq}) must hold for all $\ket{\psi}$ in the invariant set of the system $\forall{t>T}$, if operator $P$ is designed such that equation (\ref{P}) is satisfied, then, based on equation (\ref{lambda1}), the following condition must hold in order to ensure condition (\ref{eq}):  
\begin{equation}\label{c}
 c_m^{\star}c_j=0; \; \forall m,j \in \{1,...,n\}.
 \end{equation}
This condition is satisfied if $c_m\neq0$ for exactly one $m\in \{1,...,n\}$. Using the normalization condition $\displaystyle\sum_{i=1}^{n} c_m^2=1$, it is concluded that $c_m=1$ and $\ket{\psi}=[\ket{\lambda_m}]$, where $\ket{\lambda_m}$ is the corresponding eigenvector of drift Hamiltonian. 
 \end{proof}

Equations (\ref{h}) and (\ref{P}) are very similar except that in equation (\ref{h}), the constraint is on the control Hamiltonian operators $H_k$, which are usually associated with the nature of the system and  cannot be freely chosen. However, in equation (\ref{P}) the constraint is on operator $P$, which does not interfere with the nature of the system. 
 As shown before, conditions (\ref{invarr0}) and (\ref{invarr3}) characterize the invariant set of the system. So far, using equation (\ref{invarr0}), it has been found that by properly choosing $P$, the invariant set will uniquely include eigenvectors of $H_0$. Therefore, it is sufficient to check the proposed conditions for eigenvectors of $H_0$.
 
 In the next section, designing the operator $P$ to satisfy equation (\ref{P}) is demonstrated via an example.

\section{Simulation Results}\label{results}
 
  In this section, first a five-state quantum system from \cite{tersigni1990using,kuang2008lyapunov} is introduced, and then, the Lyapunov-based stabilization methods proposed in \cite{grivopoulos2003lyapunov,mirrahimi2005lyapunov,shuang2007quantum} are compared to the proposed method in this paper.

Consider the time evolution of a five-level quantum system with one control input ($u_1$) as follows:  
\begin{equation}\label{system}
  i \frac{\partial }{\partial t}\ket{\psi{(t)}}= (H_0+u_1H_1)\ket{\psi{(t)}},
 \end{equation}
where $\hbar$ is normalized to $1$ for simplicity. Drift and control Hamiltonians of the system are respectively defined as:
  \begin{equation}\label{systemH0}
      H_0=\begin{pmatrix}
      1 & 0 & 0 & 0 & 0 \\
      0 & 1.2 & 0 & 0 & 0 \\ 
       0 & 0 & 1.3 & 0 & 0 \\
       0 & 0 & 0 & 2 & 0 \\
       0 & 0 & 0 & 0 & 2.15 \\
      \end{pmatrix},
  \end{equation}
  \begin{equation}\label{systemH1}
      H_1=\begin{pmatrix}
      0 & 0 & 0 & 1 & 1 \\
      0 & 0 & 0 & 1 & 1 \\ 
       0 & 0 & 0 & 1 & 1 \\
       1 & 1 & 1 & 0 & 0 \\
       1 & 1 & 1 & 0 & 0 \\
      \end{pmatrix}.
  \end{equation}
  State of the system is denoted as $\ket{\psi_{(t)}}=(c_1,c_2,c_3,c_4,c_5)^T$. It is assumed that the system is initially in the state $\ket{\psi_0}=(1,0,0,0,0)^T$. The goal is to lead the system from this initial state to the final state $\ket{\psi_f}=(0,0,0,0,1)^T$. In what follows, we show how the designed control signal using the proposed theory stabilizes this system. In this section, operator $P$ is selected such that $[H_0,P]\neq0$, and additionally, condition ({\ref{P}}) is satisfied. Therefore, based on Theorem 2, this method stabilizes the system. Orthonormal eigenvectors of operator $P$ are chosen as follow:
\begin{eqnarray}
     \ket{P_1}&=& \begin{bmatrix}
     1\\
     1\\
     -1\\
     -1\\
     0
\end{bmatrix}, \;
\ket{P_2}= \begin{bmatrix}
     1\\
     -1\\
     1\\
     -1\\
     0
\end{bmatrix}, \;
\ket{P_3}= \begin{bmatrix}
     1\\
     1\\
     1\\
     1\\
     0
\end{bmatrix}, \\
\ket{P_4}&=& \begin{bmatrix}
     -1\\
     1\\
     1\\
     -1\\
     -1
\end{bmatrix}, \;
\ket{P_5}= \begin{bmatrix}
     -1\\
     1\\
     1\\
     -1\\
    -4
\end{bmatrix}, \nonumber
\end{eqnarray}
with the corresponding eigenvalues:
  \begin{equation}
      p_1=3, p_2=1, p_3=1, p_4=1, p_5=0.05.
  \end{equation}
These values were chosen by trial and error to get the best performance for the system. In matrix form, operator $P$ can be written as:
   \begin{equation}
      P=\begin{pmatrix}
      5.2 & 0.8 & 2.8 & -0.8 & 0.3 \\
      0.8 & 5.2 & -0.8 & 2.8 & -0.3 \\ 
       2.8 & -0.8 & 5.2 & 0.8 & -0.3 \\
       -0.8 & 2.8 & 0.8 & 5.2 & 0.3 \\
       0.3 & -0.3 & -0.3 & 0.3 & 1.7 \\
      
      \end{pmatrix}.
  \end{equation}
Figures \ref{fig:el1} and \ref{fig:el2} illustrate the state trajectories and time evolution of the Lyapunov function (\ref{2}), respectively. 
  \begin{figure}
      \centering
      \includegraphics[width=1\linewidth]{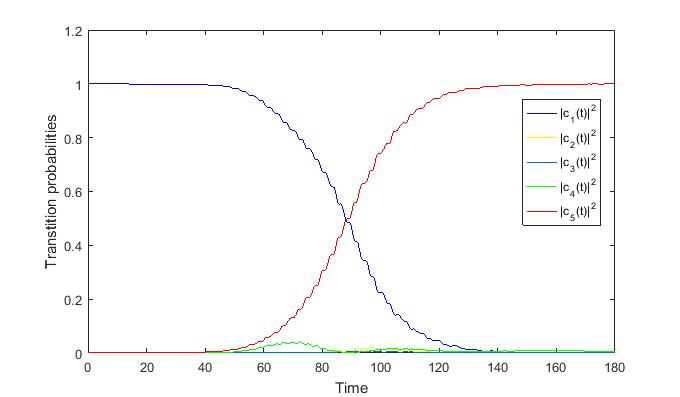}
      \caption{State trajectories of the system, subject to control law (\ref{uk}). The illustration shows complete transition from the initial state $\ket{\psi_0}=(1,0,0,0,0)^T$ to the desired state $\ket{\psi_f}=(0,0,0,0,1)^T$.}
      \label{fig:el1}
  \end{figure}
  \begin{figure}
      \centering
      \includegraphics[width=0.9\linewidth]{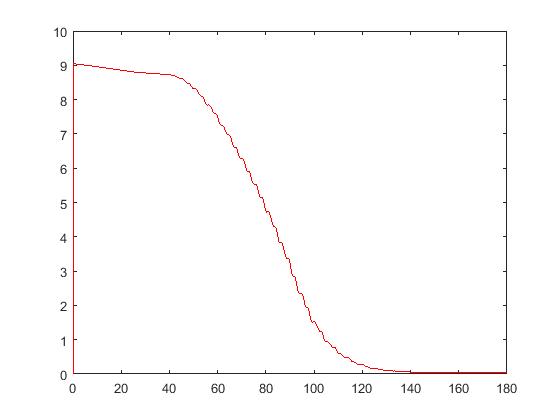}
      \caption{Time evolution of the Lyapunov function (\ref{2}). This figure illustrates how the value of the Lyapunov function decreases in time, which leads the system to the desired final state.}
      \label{fig:el2}
  \end{figure}

Other Lyapunov-based methods proposed in the literature \cite{grivopoulos2003lyapunov,kuang2008lyapunov,mirrahimi2005lyapunov}  are also applied to system (\ref{system})-(\ref{systemH1}), and simulation results are illustrated. These methods, although applicable to a wide variety of other quantum systems, do not stabilize the quantum system under consideration:
\begin{itemize}
\item The method proposed in \cite{grivopoulos2003lyapunov} is not applicable for this system, because the necessary condition $\bra{\lambda_m}H_1\ket{\lambda_j}\neq 0; \; \forall m\neq j$ is not satisfied for this quantum system. 
\item The stabilization scheme based on Lyapunov function ($8$) in \cite{kuang2008lyapunov} cannot be used for this five-state quantum system, because the control input will stay equal to zero and the Lyapunov function remains constant.
\item The stabilization scheme based on the Lyapunov function proposed in \cite{kuang2008lyapunov} results in an oscillating control signal, and the system will become unstable as shown in Figure \ref{fig:333}. 
\item Similarly, the stabilization scheme proposed in \cite{mirrahimi2005lyapunov} results in an oscillating control signal and the system will become unstable as shown in Figure \ref{fig:444}. 
\end{itemize}
Figures \ref{fig:el1}, \ref{fig:333}, and \ref{fig:444} confirm the efficiency and superiority of the proposed method.
 
\begin{figure}
     \centering
     \includegraphics[width=1\linewidth]{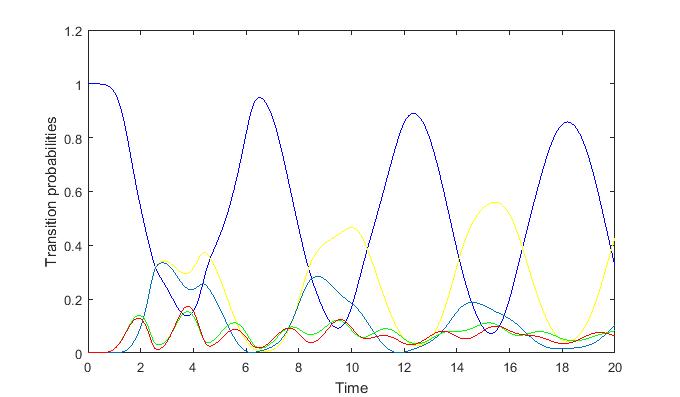}     \caption{State trajectory of the system, using the method proposed in \cite{kuang2008lyapunov}. As illustrated in the figure, the control scheme makes the system unstable.}
     \label{fig:333}
\end{figure}

\begin{figure}
     \centering
     \includegraphics[width=1\linewidth]{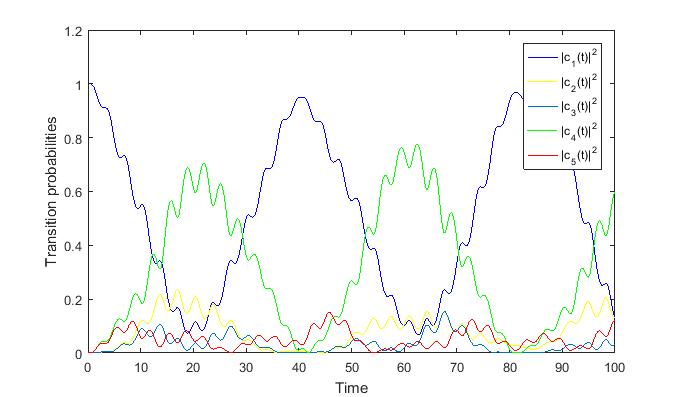}
     \caption{State trajectory of the system, using the method proposed in \cite{mirrahimi2005lyapunov}. As illustrated in the figure, this control scheme makes the system unstable.}
     \label{fig:444}
\end{figure}{}
    
\section{Conclusion} 
\par 
A Lyapunov-based stabilization scheme was proposed, which is applicable to a large class of closed quantum systems with minimal restrictions on the nature of the system. This scheme uses a positive-definite operator, which provides extra degrees of freedom for the designer. Depending on whether or not this operator and drift Hamiltonian of the system have common eigenvectors, two distinct cases can be considered. In the latter case, where the positive-definite operator in the Lyapunov function and drift Hamiltonian of the quantum system do not commute, it is shown that the set of invariant states of the system is limited to the set of eigenstates of the drift Hamiltonian. Additionally, the resulting quantum controller provides invariance with respect to the global phase of quantum states by virtue of equivalence classes of quantum states and equivalently satisfied statement criterion. Simulation results showed how the proposed method was capable of stabilizing a five-level quantum system, which cannot be successfully stabilized by other existing methods in the literature.
\par
For future research, a time-varying version of the positive-definite operator will be considered that can be adaptively tuned. Potential applications of the proposed method in quantum computing, especially  superconducting quantum computing, will also be explored.
  
\bibliographystyle{IEEEtran}
\bibliography{IEEEabrv,Main}

\end{document}